\newtheorem{theorem}{Theorem}
\newtheorem{lem}{Lemma}
\newtheorem{corollary}{Corollary}
\newtheorem{definition}{Definition}
\theoremstyle{definition}
\newtheorem{example}{Example}
\newtheorem*{prob*}{Problem}
\theoremstyle{remark}
\newtheorem{remark}{Remark}
\global\long\def\NN{\mathbb{N}}
\global\long\def\EE{\mathbb{E}}
\global\long\def\PP{\mathbb{P}}
\global\long\def\11{\mathbb{1}}
\begin{document}
\title{Bounds on the Effective-length of Optimal Codes for Interference Channel with Feedback}
\author{
 \IEEEauthorblockN{Mohsen Heidari}
  \IEEEauthorblockA{EECS Department\\University of Michigan\\ Ann Arbor,USA \\
    Email: mohsenhd@umich.edu} 

\and
\IEEEauthorblockN{ Farhad Shirani}
\IEEEauthorblockA{ECE Department\\
New York University\\
New York, New York, 11201\\
Email: fsc265@nyu.edu}
\and

  \IEEEauthorblockN{S. Sandeep Pradhan}
  \IEEEauthorblockA{EECS Department\\University of Michigan\\ Ann Arbor,USA \\
    Email: pradhanv@umich.edu}
}
\IEEEoverridecommandlockouts


\maketitle



\begin{abstract}
In this paper, we investigate the necessity of finite blocklength codes in distributed transmission of independent message sets over channels with feedback. Previously, it was shown that finite effective length codes are necessary in distributed transmission and compression of sources.
We provide two examples of three user interference channels with feedback where codes with asymptotically large effective lengths are sub-optimal. As a result, we conclude that coded transmission using finite effective length codes is necessary to achieve optimality. We argue that the sub-optimal performance of large effective length codes is due to their inefficiency in preserving the correlation between the inputs to the distributed terminals in the communication system. This correlation is made available by the presence of feedback at the terminals and is used as a means for coordination between the terminals when using finite effective length coding strategies.
\end{abstract}

\section{Introduction}
Most of the coding strategies developed in
information theory are based on random code ensembles which are constructed using independent and identically
distributed (IID) sequences of random variables \cite{HK,BT,Marton,ZB}. The codes associated with different terminals in the network are
mutually independent. Moreover, the blocklengths associated with these codes are asymptotically large. This allows the application of
the laws of large numbers and concentration of measure theorems when analyzing the performance of coding strategies; and leads to characterizations of their achievable regions in terms of information quantities that
are the functionals of the underlying distribution used to construct the codes. These characterizations are often called single-letter characterizations. Although the original problem is to optimize the performance of codes with asymptotically
large blocklengths, the solution is characterized by a functional (such as mutual information) of just
one realization of the source or the channel under consideration.  
It is well-known that unstructured random codes with asymptotically large blocklength can be used to achieve optimality in terms of achievable rates in point-to-point communications. In fact, it can be shown that large blocklength codes are necessary to approach optimal performance. At a high level, this is due to the fact that the efficiency of fundamental tasks of communication such as covering and packing increases as the input dimension
is increased \cite{csiszarbook}.

In network communication, one needs to (a) remove
redundancy among correlated information sources \cite{BT,ZB} in a distributed manner in the source coding
problems, and (b) induce redundancy among distributed terminals to facilitate \cite{HK,Marton} cooperation among them.
For example, in the network source coding problems such as distributed source coding and multiple description coding, the objective is to exploit the statistical correlation of the distributed information sources. Similarly, in the network channel coding problems, such
as the interference channels and broadcast channels, correlation of information among different terminals
is induced for better cooperation among them. At a high level, in addition to the basic objectives of efficient packing and covering at every terminal, the network coding strategies need to exploit statistical correlation among distributed information
sources or induce statistical correlation among information accessed by terminals in the network.

Witsenhausen \cite{ComInf2} and Gacs-Korner \cite{ComInf1} made the observation that distributed processing of pairs of sequences of random variables leads to outputs which are less correlated than the original input sequences. In the network communications context, this implies that the outputs of encoding functions at different terminals  in a network are less correlated with each other than the original input sequences. 
 In \cite{fin1,fin2}, we built upon these observations and showed that the correlation between the outputs of pairs of encoding functions operating over correlated sequences is inversely proportional to the effective length of the encoding functions. 
Based on these results, it can be concluded that while random unstructured coding strategies with asymptotically large blocklengths are efficient in performing the tasks of covering and packing, they are inefficient in facilitating coordination between different terminals. 
Using these results, we showed that finite effective codes are necessary to achieve optimality in various setups involving
the transmission of correlated sources. Particularly, we showed that the effective length of optimality achieving codes is bounded from above in the distributed source coding problem as well as the problem of transmission of correlated sources over the multiple access channel (MAC) and the interference channel (IC) \cite{fin1,FinRD}.

So far, all of the results showing the necessity of finite effective length codes pertain to situations involving the distributed transmission of sources over channels and distributed compression of sources. However, the question of whether such codes are necessary in multi-terminal channel coding has remained open. The reason is that the application of the results in \cite{fin1,fin2} requires the presence of correlated inputs in different terminals of the network. In the case of distributed processing of sources, such correlation is readily available in the form of the distributed source. Whereas, in distributed transmission of independent message it is unclear how such a correlation can be created and exploited.  In this work, we argue that in channel coding with feedback, correlation is induced because of the feedback link. More precisely, the feedback sequence at one terminal is correlated with the message set in the other terminal. In order to exploit this correlation efficiently, finite effective length codes are necessary. The contributions of this paper can be summarized as follows. We provide two examples of interference channels with feedback where finite effective length codes are necessary to approach optimality. For each of these examples, we provide an outer bound on the achievable region as a function of the effective-length of the encoding functions used at the transmitters. Furthermore, we use finite effective length codes to prove the achievability of certain rate vectors which lie outside of the outer bound when the effective length is large. The combination of these two results shows that in these examples any coding strategy which uses encoding functions with asymptotically large effective lengths is sub-optimal. 

The rest of the paper is organized as follows: In Section II we introduce the problem formulation. Section III provides the prior results which are used in this paper. Section IV explains our main results. Finally, section V concludes the paper. 

\section{Definitions and Model}
\subsection{Notations}
Random variables are denoted using capital letters such as $X, Y$. The random vector $(X_1, X_2, ..., X_n)$ is represented by $\underline{X}^n$. Similarly, we use underline letters to denote vectors of numbers and functions. For shorthand, vectors are sometimes represented using underline letters without any superscript such as $\underline{X}, \underline{f}$, and $\underline{a}$. Calligraphic letters such as $\mathcal{C}$ and  $\mathcal{M}$  are used to represent sets. 
%

\subsection{Model}
The problem of Interference Channel with Feedback (IC-FB) is studied in  \cite{IC-FB0Tunietti} and \cite{Kramer-ICFB}.   A three-user interference channel with generalized feedback (IC-FB) is characterized by three input alphabets $(\mathcal{X}_1, \mathcal{X}_2, \mathcal{X}_3)$, three output alphabets $(\mathcal{Y}_1, \mathcal{Y}_2,\mathcal{Y}_3),$ three feedback alphabets $(\mathcal{Z}_1, \mathcal{Z}_2, \mathcal{Z}_3)$, and  transition probability distributions $(Q_{\underline{Y}|\underline{X}},P_{\underline{Z}|\underline{Y}})$. We assume that all the alphabets are finite and that the channel is memoryless. Let $\underline{x}_i^n, \underline{y}_i^n, \underline{z}_i^n, i\in [1,3],$ be the channel inputs, outputs and the channel feedback  after $n$ uses of the channel, respectively. The memoryless property implies that:
\begin{align*}
p(y_{j,n}, z_{j,n}, & j\in [1,3]  | \underline{y}_i^{n-1},\underline{z}_i^{n-1}, \underline{x}_i^n, i\in [1,3])\\
&=Q_{\underline{Y}|\underline{X}}(y_{1,n},y_{2,n},y_{3,n}|x_{1,n},x_{2,n},x_{3,n})\\
&\times P_{\underline{Z}|\underline{Y}}(z_{1,n},z_{2,n},z_{3,n}|y_{1,n},y_{2,n},y_{3,n})
\end{align*}
 In the three user IC-FB, there are three transmitters and three receivers. The $i$th transmitter $i\in [1,3],$ intends to transmit the message index $W_i$ to the $i$th receiver. It is also assumed that the feedback $Z_i, i\in [1,3]$ is causally available at transmitter $i$ with one unit of delay. An example of such setup is depicted in Figure \ref{fig: IC diagram}. In this figure, $Z_2$ is trivial, and $P_{Z_1,Z_3|Y_1,Y_2,Y_3}=P_{Z_1|Y_1}P_{Z_3|Y_3}$ (i.e. the second transmitter does not receive any feedback). 

\begin{figure}[!t]
\centering
\includegraphics[scale=0.6]{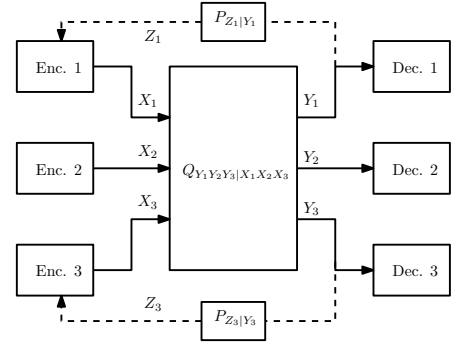}
\caption{An instance of the three-user IC with generalized feedback. Here transmitters 1 and 3 receive noisy feedback, whereas transmitter 2 does not receive feedback.}
\label{fig: IC diagram}
\end{figure}

For positive integers $M_1,M_2,M_3$ and $N$ are arbitrary positive integers.
\begin{definition} \label{def: coding scheme}
A $(M_1,M_2,M_3,N)$ feedback-block-code for the three user IC-FB consists of 
\begin{itemize}
\item Three sets of messages $\mathcal{M}_i=\{1,2, ..., M_i\}, i\in [1,3].$  
\item Three sequences of encoding functions 
$$f_{i,n}: \mathcal{M}_i \times \mathcal{Z}_i^{n-1} \rightarrow \mathcal{X}_i, \quad 1\leq n\leq N,$$
\item Three decoding functions:
\begin{align*}
g_i: \mathcal{Y}_i^N \rightarrow \mathcal{M}_i, \quad i\in [1,3].
\end{align*}
\end{itemize} 
\end{definition}
The message for transmitter $i$ is denoted by a random variable $W_i$. It is assumed that the messages $W_i$ are mutually independent and uniformly distributed on $\mathcal{M}_i, i\in [1,3].$ 
The output of the $i$th transmitter at the $n$th use of the channel is denoted by $X_{i, n}=f_{i,n}(W_i, Z_i^{n-1})$. The rate-triple of an $(M_1,M_2,M_3,N)$ code is defined as $R_i=\frac{\log M_i}{N}, i\in [1,3]$. Let $\widehat{W}_i, i\in [1,3],$ be the decoded message at the receiver $i$. Then, the probability of the error is defined as 
\begin{align*}
P_e \triangleq P((W_1,W_2,W_3) \neq (\widehat{W}_1, \widehat{W}_2, \widehat{W}_3)).
\end{align*}
For this problem, at time $n$, each transmitter can chose an encoding function randomly using a probability measure defined over the set of all encoding functions $\underline{f}_i^N$ described in Definition \ref{def: coding scheme}. The following defines a randomized coding strategy.

\begin{definition} 
A $(M_1,M_2,M_3,N)$-randomized coding strategy is characterized by a probability measure $\mathsf{P}_N$ on the set of all functions $(\underline{f_i}^N), i\in [1,3]$ described in Definition \ref{def: coding scheme}. 
\end{definition}

Next, we define the achievable region for the three user IC-FB. 

\begin{definition} 
For $\epsilon>0$, a rate-triple $(R_1,R_2,R_3)$ is said to be $\epsilon$-achievable by a  feedback-block-code with parameters $(M_1,M_2,M_3,N)$, if the following conditions are satisfied.
$$P_e \leq \epsilon, \quad  \quad \frac{1}{N}\log M_i \geq R_i-\epsilon, ~ i\in [1,3].$$
\end{definition}

\begin{definition} 
For $\epsilon>0$, a rate-triple $(R_1,R_2,R_3)$ is said to be $\epsilon$-achievable by a $(M_1,M_2,M_3,N)$-randomized coding strategy with probability measure $\mathsf{P}$, if, with probability one with respect to $\mathsf{P}$, there exists a feedback-block-code for which $(R_1,R_2,R_3)$ is $\epsilon$-achievable.
\end{definition}

\begin{definition} 
For $\epsilon>0$, a rate-triple $(R_1,R_2,R_3)$ is said to be $\epsilon$-achievable, if there exist a $(M_1,M_2,M_3,N)$ feedback-block-code (randomized coding strategy) for which $(R_1,R_2,R_3)$ is $\epsilon$-achievable. 
\end{definition}

\begin{definition} 
A rate-triple $(R_1,R_2,R_3)$ is said to be achievable, if it is $\epsilon$-achievable for any $\epsilon>0$. Given an IC-FB, the set of all achievable rate-triples is called the feedback-capacity.
\end{definition}

%

\section{Background and Prior Results}
In this section, we summarize the results in \cite{fin1} on the correlation between the outputs of Boolean functions of pairs of sequences of random variables. These results are used in the next section to prove the necessity of finite effective length codes. 

 \begin{definition}
  $(X,Y)$ is called a pair of DMS's if we have $P_{X^n,Y^n}(x^n,y^n)=\prod_{i\in[1,n]}P_{X_i,Y_i}(x_i,y_i), \forall n\in \mathbb{N}, x^n\in \mathcal{X}^n,y^n\in \mathcal{Y}^n$, where $P_{X_i,Y_i}=P_{X,Y}, \forall i\in [1,n]$, for some joint distribution $P_{X,Y}$.
\end{definition}

\begin{definition}
 A Binary-Block-Encoder (BBE) is characterized by the triple $(\underline{e},\mathcal{X},n)$, where $\underline{e}$ is a mapping $\underline{e}:\mathcal{X}^n\to \{0,1\}^n$, $\mathcal{X}$ is a finite set, and $n$ is an integer.  
\end{definition}

\begin{definition}
 For a BBE $(\underline{e},\mathcal{X},n)$ and DMS X,  let $P\left(e_i(X^n)=1\right)=q_i$. 
For each Boolean function $e_i, i\in [1,n]$, the real-valued function corresponding to $e_i$ is defined as follows:
\begin{align}
\tilde{e}_i(X^n)=    \begin{cases}
      1-q_i, & \qquad  \text{if } e_i(X^n)=1, \\
      -q_i. & \qquad\text{otherwise}.
    \end{cases}
\end{align}
\end{definition}

\begin{definition}
 For a BBE $(\underline{e},\mathcal{X},n)$, define the decomposition 
  $\tilde{e}=\sum_{\mathbf{i}}\tilde{e}_{\mathbf{i}}$, where $\tilde{e}_{\mathbf{i}}=\mathbb{E}_{X^n|X_{\mathbf{i}}}(\tilde{e}|X_{\mathbf{i}})-\sum_{\mathbf{j}< \mathbf{i}} \tilde{e}_{\mathbf{j}}$. Then, $\tilde{e}_{\mathbf{i}}$ is the component of $\tilde{e}$ which is only a function of $\{X_{i_j}|i_j=1\}$. The collection $\{\tilde{e}_{\mathbf{i}}|\sum_{j\in[1,n]}i_j=k\}$, is called the set of k-letter components of $\tilde{e}$. \label{Rem:Dec}
\end{definition}


\begin{definition}
 For a function $e: \mathcal{X}^n \to \{0,1\}$, with real decomposition vector $(\tilde{e}_{\mathbf{i}})_{\mathbf{i}\in \{0,1\}^n}$, the dependency spectrum is defined as the vector $(\mathbf{P}_{\mathbf{i}})_{\mathbf{i}\in \{0,1\}^n}$ of the variances, where $\mathbf{P}_{\mathbf{i}}=Var(\tilde{e}_{\mathbf{i}}), \mathbf{i}\in \{0,1\}^n$. The effective length is defined as the expected value $\widetilde{\mathbf{L}}=\frac{1}{n}\sum_{\mathbf{i}\in \{0,1\}^n}w_H(\mathbf{i})\cdot \mathbf{P}_{\mathbf{i}}$, where $w_H(\cdot)$ is the Hamming weight.
 \end{definition}

\begin{lem}\label{lem: single letter codes correlation }
 Let $\psi\triangleq \sup(E(e(X)f(Y))$, where the supremum is taken over all single-letter functions $e:\mathcal{X}\to \mathbb{R}$, and $f:\mathcal{Y}\to\mathbb{R}$ such that $h(X)$ and $g(Y)$ have unit variance and zero mean. 
  the following bound holds:
 \begin{align*}
&2\sqrt{\sum_{\mathbf{i}}\mathbf{P}_{\mathbf{i}}}\sqrt{\sum_{\mathbf{i}}\mathbf{Q}_{\mathbf{i}}}-2\sum_{\mathbf{i}}C_\mathbf{i}\mathbf{P}_{\mathbf{i}}^{\frac{1}{2}}\mathbf{Q}_{\mathbf{i}}^{\frac{1}{2}} 
\leq  P(e(X^n)\neq f(Y^n))
\\&\leq 1- 2\sqrt{\sum_{\mathbf{i}}\mathbf{P}_{\mathbf{i}}}\sqrt{\sum_{\mathbf{i}}\mathbf{Q}_{\mathbf{i}}}+2\sum_{\mathbf{i}}C_\mathbf{i}\mathbf{P}_{\mathbf{i}}^{\frac{1}{2}}\mathbf{Q}_{\mathbf{i}}^{\frac{1}{2}} 
,
\end{align*}
 where 1) $C_{\mathbf{i}}\triangleq  \psi^{N_\mathbf{i}}$, 2) $\mathbf{P}_{\mathbf{i}}$ is the variance of $\tilde{e}_{\mathbf{i}}$, 3) $\underline{\tilde{e}}$ is the real function corresponding to $\underline{e}$, 4) $\mathbf{Q}_{\mathbf{i}}$ is the variance of $\tilde{f}_{\mathbf{i}}$, and 5) $N_{\mathbf{i}}\triangleq w_H(\mathbf{i})$.
 \label{th:sec3_non_bin}
\end{lem}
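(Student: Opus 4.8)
The plan is to reduce the disagreement probability to a single correlation term and then sandwich that term along the decomposition. Write $q=P(e(X^n)=1)$ and $r=P(f(Y^n)=1)$, so the associated centered functions are $\tilde e=e-q$ and $\tilde f=f-r$. Since $e,f$ are $\{0,1\}$-valued, $(e-f)^2=\mathbb{1}\{e\neq f\}$, and expanding $e-f=\tilde e-\tilde f+(q-r)$ (the cross term has zero mean) yields the exact identity
\[
P(e(X^n)\neq f(Y^n))=\operatorname{Var}(\tilde e)+\operatorname{Var}(\tilde f)+(q-r)^2-2\,\mathbb{E}[\tilde e\tilde f].
\]
Using orthogonality of the decomposition components, $\operatorname{Var}(\tilde e)=q(1-q)=\sum_{\mathbf i}\mathbf P_{\mathbf i}$ and $\operatorname{Var}(\tilde f)=r(1-r)=\sum_{\mathbf i}\mathbf Q_{\mathbf i}$, so the entire problem reduces to controlling the correlation term $\mathbb{E}[\tilde e\tilde f]$.

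First I would establish that the cross terms are orthogonal. Expanding $\mathbb{E}[\tilde e\tilde f]=\sum_{\mathbf i,\mathbf j}\mathbb{E}[\tilde e_{\mathbf i}\tilde f_{\mathbf j}]$, I claim every term with $\mathbf i\neq\mathbf j$ vanishes: if $\mathbf i\neq\mathbf j$, some coordinate $k$ lies in the support of one index but not the other, and conditioning on all coordinates except the $k$-th while using that $(X_k,Y_k)$ is independent of the remaining coordinates (the DMS product structure) lets the factor not depending on coordinate $k$ pull out, leaving the conditional expectation of a pure interaction term over a strict subset of its own coordinates, which is zero by the annihilation property built into Definition \ref{Rem:Dec}. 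Hence $\mathbb{E}[\tilde e\tilde f]=\sum_{\mathbf i}\mathbb{E}[\tilde e_{\mathbf i}\tilde f_{\mathbf i}]$.

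The crux is the per-component tensorization estimate $\bigl|\mathbb{E}[\tilde e_{\mathbf i}\tilde f_{\mathbf i}]\bigr|\le \psi^{N_{\mathbf i}}\mathbf P_{\mathbf i}^{1/2}\mathbf Q_{\mathbf i}^{1/2}=C_{\mathbf i}\mathbf P_{\mathbf i}^{1/2}\mathbf Q_{\mathbf i}^{1/2}$, which I would prove operator-theoretically. Let $T$ be the conditional-expectation operator $(Tg)(X^n)=\mathbb{E}[g(Y^n)\mid X^n]$, which factors as the tensor power $T=\bigotimes_{l=1}^n T_l$. By the definition of $\psi$ as the maximal correlation, each $T_l$ fixes constants and has operator norm $\psi$ on the zero-mean subspace of $L^2(Y_l)$. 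Since $\tilde f_{\mathbf i}$ lives in the tensor product of the zero-mean subspaces over exactly the $N_{\mathbf i}$ coordinates in the support of $\mathbf i$ (and constants elsewhere), $T$ contracts it by at most $\psi^{N_{\mathbf i}}$ and keeps it in the matching $X$-side subspace; Cauchy--Schwarz then gives $|\mathbb{E}[\tilde e_{\mathbf i}\tilde f_{\mathbf i}]|=|\langle \tilde e_{\mathbf i},T\tilde f_{\mathbf i}\rangle|\le \psi^{N_{\mathbf i}}\|\tilde e_{\mathbf i}\|\,\|\tilde f_{\mathbf i}\|$. Summing yields $|\mathbb{E}[\tilde e\tilde f]|\le\sum_{\mathbf i}C_{\mathbf i}\mathbf P_{\mathbf i}^{1/2}\mathbf Q_{\mathbf i}^{1/2}$. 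This tensorization is where essentially all the difficulty lies; the spectral characterization of maximal correlation and its stability under tensor powers (as in \cite{fin1}) is the tool I expect to rely on, and the main obstacle is making the subspace/operator-norm bookkeeping rigorous.

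Finally I would close the two bounds with elementary AM--GM steps. For the lower bound, $\operatorname{Var}(\tilde e)+\operatorname{Var}(\tilde f)+(q-r)^2=q(1-r)+r(1-q)\ge 2\sqrt{q(1-r)\,r(1-q)}=2\sqrt{\sum_{\mathbf i}\mathbf P_{\mathbf i}}\sqrt{\sum_{\mathbf i}\mathbf Q_{\mathbf i}}$, using the identity $q(1-r)r(1-q)=q(1-q)\,r(1-r)$; combining with $-2\mathbb{E}[\tilde e\tilde f]\ge-2\sum_{\mathbf i}C_{\mathbf i}\mathbf P_{\mathbf i}^{1/2}\mathbf Q_{\mathbf i}^{1/2}$ gives the stated lower bound. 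For the upper bound I would analyze $P(e=f)=1-q-r+2qr+2\mathbb{E}[\tilde e\tilde f]$, note $(1-q)(1-r)+qr\ge 2\sqrt{\sum_{\mathbf i}\mathbf P_{\mathbf i}}\sqrt{\sum_{\mathbf i}\mathbf Q_{\mathbf i}}$ by the same identity, and use $P(e\neq f)=1-P(e=f)$ together with the correlation bound to obtain the upper bound.
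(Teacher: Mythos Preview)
The paper does not actually prove this lemma; it is stated in Section~III as a background result imported from \cite{fin1,fin2}, so there is no in-paper proof to compare against. That said, your proposal is correct and is essentially the standard argument used in those references: the exact identity $P(e\neq f)=q(1-q)+r(1-r)+(q-r)^2-2\,\mathbb{E}[\tilde e\tilde f]$, the Efron--Stein/Hoeffding orthogonality $\mathbb{E}[\tilde e_{\mathbf i}\tilde f_{\mathbf j}]=0$ for $\mathbf i\neq\mathbf j$, the maximal-correlation tensorization $|\mathbb{E}[\tilde e_{\mathbf i}\tilde f_{\mathbf i}]|\le\psi^{N_{\mathbf i}}\mathbf P_{\mathbf i}^{1/2}\mathbf Q_{\mathbf i}^{1/2}$ via the conditional-expectation operator on the zero-mean subspace, and the AM--GM closure using $q(1-r)\cdot r(1-q)=q(1-q)\cdot r(1-r)$. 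There is no gap.
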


\begin{remark}
 The value $C_{\mathbf{i}}$ is decreasing in $N_{\mathbf{i}}$. So, $ P(e(X^n)\neq f(Y^n))$, is maximized when most of the variance $\mathbf{P}_{\mathbf{i}}$ is distributed on $\tilde{e}_\mathbf{i}$ which have lower $N_{\mathbf{i}}$ (i.e. operate on smaller blocks). This implies that encoding functions with smaller effective-lengths can have higher correlation between their outputs.
  \end{remark}

\section{Main Results}\label{sec: main results}
In this section, we introduce two examples of three user IC-FBs where finite effective length codes are necessary to approach optimality. 
\begin{example} \label{ex: IC-FB}

Consider the setup shown in Figure \ref{fig: IC-FB-Exp}. Here, $(X_{11}, X_{12}), X_2, (X_{32}, X_{33})$ are the outputs of the $i$th Encoder $i\in [1,3]$, respectively. The channel outputs $Y_1$, $(Y_{2}, Y'_{2})$, and $Y_{3}$ are received at decoders 1,2 and 3, respectively. The channel corresponding to the transition probability $P_{Y'_2| X_{12} X_{32}}$ is described by the following relation:
\begin{align*}
Y'_2=X_{12}+N_{\delta} +(X_{12} \oplus X_{32}) \wedge E,
\end{align*}
where $N_\delta$ and $E$ are independent Bernoulli random variables with $P(N_\delta=1)=\delta$ and $P(E=1)=\frac{1}{2}$. Also, the random variables $N_{\epsilon}$, and $N_{p}$ in Figure \ref{fig: IC-FB-Exp} are Bernoulli random variables with $P(N_\epsilon=1)=\epsilon$ and $P(N_p=1)=p$, respectively. The variables $N_\delta,E,N_\epsilon$ and $N_P$ are mutually independent.
In this setup feedback is only available at encoder $1$ and $3$. The feedback at the first transmitter is $Z_1=Y_{1}$ with probability one. The feedback at the third transmitter is $Z_3=Y_3$ with probability one. In other words the two transmitters receive noiseless feedback.
\end{example}

\begin{figure}[!t]
\centering
\includegraphics[scale=0.6]{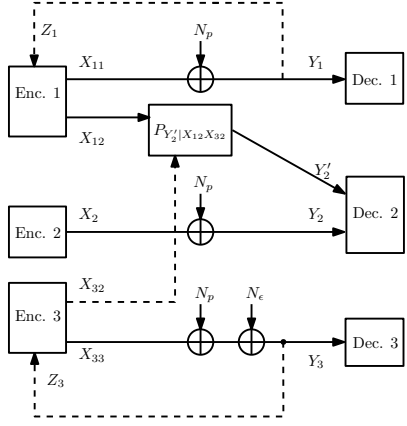}
\caption{The diagram of the IC-FB given in Example 1. In this setup, $Z_1$ is the feedback at Transmitter 1, and $Z_3$ is the feedback at transmitter 3.  }
\label{fig: IC-FB-Exp}
\end{figure}

The following provides an outer bound on the achievable rates after $n$ channel uses. The bound is provided as a function of the average probability of agreement between the encoder outputs $X_{12}$ and $X_{32}$.

\begin{theorem}\label{thm: Exp 1 converse}
Any $(M_1,M_2,M_3,n)$-randomized coding strategy for the channel in Example \ref{ex: IC-FB} achieves a rate vector $(R_1, R_2,R_3)$ satisfying the following inequalities: 
\begin{align*}
R_1 &\leq 1-h(p),\\
 R_2 &\leq 1-\Big|h_b(p)-(1-h_b(\delta))(\frac{1}{n}\sum_{i=1}^nP(X_{12,i}=X_{32,i})\Big|^+),\\
  R_3& \leq 1-h_b(p*\epsilon).
\end{align*}

\end{theorem}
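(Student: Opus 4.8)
The plan is to establish each inequality by a Fano-plus-data-processing converse, treating users $1$ and $3$ as essentially point-to-point feedback links and reserving the real work for $R_2$. For any code with $P_e\to 0$, Fano's inequality gives $nR_1\le I(W_1;Y_1^n)+n\epsilon_n$, $nR_3\le I(W_3;Y_3^n)+n\epsilon_n$, and $nR_2\le I(W_2;Y_2^n,{Y'_2}^n)+n\epsilon_n$. For $R_1$ I would give receiver $1$ a genie supplying all interfering inputs, reducing its observation to $X_{11}$ through the $\mathrm{BSC}(p)$ that governs $Y_1$; a genie only enlarges the region, so it suffices to single-letterize the resulting point-to-point feedback channel. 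Since $X_{11,i}=f_{1,i}(W_1,Y_1^{i-1})$, we have $H(Y_{1,i}\mid Y_1^{i-1},W_1)=H(Y_{1,i}\mid X_{11,i})=h_b(p)$, hence $I(W_1;Y_1^n)\le\sum_i[H(Y_{1,i})-h_b(p)]\le n(1-h_b(p))$. The same argument applied to the cascade $\mathrm{BSC}(p)\circ\mathrm{BSC}(\epsilon)=\mathrm{BSC}(p*\epsilon)$ seen by receiver $3$ yields $R_3\le 1-h_b(p*\epsilon)$.

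The bound on $R_2$ is the crux, and I would obtain it as the minimum of two separate upper bounds on $I(W_2;Y_2^n,{Y'_2}^n)$. For the first, split $I(W_2;Y_2^n,{Y'_2}^n)=I(W_2;Y_2^n)+I(W_2;{Y'_2}^n\mid Y_2^n)$. Because transmitter $2$ has no feedback, $X_2^n$ is a function of $W_2$ alone, and the direct link acts as a $\mathrm{BSC}(p)$, so $I(W_2;Y_2^n)\le I(X_2^n;Y_2^n)\le n(1-h_b(p))$. For the helper term I would lower-bound $H({Y'_2}^n\mid Y_2^n,W_2)$ by further conditioning on $(X_{12}^n,X_{32}^n)$; since ${Y'_2}^n$ is a function of $(X_{12}^n,X_{32}^n)$ and the independent noises $N_\delta^n,E^n$ only, this collapses to $H({Y'_2}^n\mid X_{12}^n,X_{32}^n)=\sum_i H(Y'_{2,i}\mid X_{12,i},X_{32,i})$.

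The decisive computation is the per-letter entropy of the agreement-gated channel. When $X_{12,i}=X_{32,i}$ the term $(X_{12,i}\oplus X_{32,i})\wedge E_i$ vanishes and $Y'_{2,i}$ is $X_{12,i}$ seen through a $\mathrm{BSC}(\delta)$, contributing $h_b(\delta)$; when they disagree the added $E_i\sim\mathrm{Bernoulli}(1/2)$ randomizes the output entirely, contributing $1$. With $\alpha_i=P(X_{12,i}=X_{32,i})$ this gives $H(Y'_{2,i}\mid X_{12,i},X_{32,i})=\alpha_i h_b(\delta)+(1-\alpha_i)$, and together with $H({Y'_2}^n\mid Y_2^n)\le n$ it yields $I(W_2;{Y'_2}^n\mid Y_2^n)\le\sum_i\alpha_i(1-h_b(\delta))=n\bar\alpha(1-h_b(\delta))$, where $\bar\alpha=\frac1n\sum_{i=1}^n\alpha_i$. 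Hence the first upper bound is $R_2\le(1-h_b(p))+\bar\alpha(1-h_b(\delta))$.

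The second upper bound is the cap $R_2\le 1$, which is what forces the $|\cdot|^+$ and is where feedback must be handled carefully. I claim $W_2\to X_2^n\to(Y_2^n,{Y'_2}^n)$ is a Markov chain. This is immediate for $Y_2^n$; for ${Y'_2}^n$ one must unroll the feedback, noting that $X_{12,i}$ is built from $W_1$ and $Y_1^{i-1}$ while $Y_1$ sees $W_2$ only through the interfering input $X_2$, so inductively the whole subsystem driven by $(W_1,W_3)$ and the channel noises is independent of $W_2$ once $X_2^n$ is fixed; thus $(X_{12}^n,X_{32}^n)$, and hence ${Y'_2}^n$, is conditionally independent of $W_2$ given $X_2^n$. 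Granting this, $I(W_2;Y_2^n,{Y'_2}^n)\le I(X_2^n;Y_2^n,{Y'_2}^n)\le H(X_2^n)\le n$. Taking the minimum of the two bounds and letting $\epsilon_n\to 0$ gives $R_2\le 1-|h_b(p)-(1-h_b(\delta))\bar\alpha|^+$. The main obstacle I anticipate is exactly this Markov/feedback bookkeeping: making rigorous that, despite the feedback loops driving $X_{12}$ and $X_{32}$, every bit of $W_2$'s influence on receiver $2$ factors through the single binary stream $X_2^n$, and that conditioning on $(X_{12}^n,X_{32}^n)$ legitimately decouples ${Y'_2}^n$ from the remaining variables.
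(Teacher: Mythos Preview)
Your argument is correct and reaches exactly the stated bound, but it proceeds by a different decomposition than the paper's. You run the chain rule as $I(W_2;Y_2^n)+I(W_2;{Y'_2}^n\mid Y_2^n)$, bound the two pieces by $n(1-h_b(p))$ and $n\bar\alpha(1-h_b(\delta))$ respectively, and then intersect with the separate cap $R_2\le 1$ obtained via $W_2\to X_2^n\to(Y_2^n,{Y'_2}^n)$. The paper instead uses the independence $W_2\perp {Y'_2}^n$ up front to write $nR_2\le I(W_2;Y_2^n\mid {Y'_2}^n)$ and then, since $X_2^n$ is a function of $W_2$ and $Y_{2,i}=X_{2,i}\oplus N_{p,i}$, reduces this to $n-H(N_p^n\mid {Y'_2}^n)$; the agreement indicator $Z_i=\mathbb{1}\{X_{12,i}=X_{32,i}\}$ is then introduced to lower-bound $H(N_p^n\mid {Y'_2}^n)$ directly by $n[h_b(p)-(1-h_b(\delta))\bar\alpha]$, and the $|\cdot|^+$ comes for free from $H(\cdot)\ge 0$. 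So the paper gets the whole $R_2$ bound in one chain, while you obtain it as the minimum of two bounds.

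Two side remarks. First, your Markov-chain concern is more delicate than this particular channel requires: here $Y_1$ (and hence $X_{12}^n$, $X_{32}^n$, ${Y'_2}^n$) is genuinely independent of $W_2$, which is exactly the shortcut the paper exploits in its step~(a); your more cautious unrolling would be needed only if the feedback to encoders $1$ and $3$ actually saw $X_2$. Second, no genie is needed for $R_1$ and $R_3$ in this example, since $Y_1$ and $Y_3$ are already point-to-point BSC outputs of $X_{11}$ and $X_{33}$; the paper just invokes the standard feedback-does-not-increase-point-to-point-capacity argument. What you gain from your route is modularity (the two pieces are bounded by independent, reusable estimates); what the paper gains is brevity and a single equivocation term $H(N_p^n\mid {Y'_2}^n)$ that transparently displays the operational role of the agreement probability.
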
 
\begin{proof}
The proof is given in Appendix \ref{appx: proof of thm exp 1 converse}.
\end{proof}
\begin{corollary}
Define the set $\mathcal{R}^*$ as the union of all rate-triples $(R_1, R_2,R_3)$ such that
\begin{align*}
R_1 &\leq 1-h_b(p), \\
R_2 &\leq 1-|h_b(p)-(1-h_b(\delta))|^+,\\
R_3 &\leq 1-h_b(p*\epsilon).
\end{align*}
Then, the feedback-capacity of the channel in Example \ref{ex: IC-FB} is contained in $\mathcal{R}^*$.
\end{corollary}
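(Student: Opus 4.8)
The plan is to prove all three inequalities with Fano's inequality plus single-letterization, treating $R_1$ and $R_3$ as essentially point-to-point converses and reserving the real work for $R_2$. Reading the channel law from Figure~\ref{fig: IC-FB-Exp}, receiver~$1$ sees a clean BSC$(p)$ driven by $X_{11}$, and receiver~$3$ sees the cascade of the $N_\epsilon$ and $N_p$ noises, i.e.\ a BSC$(p*\epsilon)$ driven by $X_{33}$. For receiver~$1$ I would write $nR_1\le I(W_1;\underline{Y}_1^n)+n\epsilon_n$ and use the feedback-compatible identity $H(Y_{1,i}\mid \underline{Y}_1^{i-1},W_1)=H(Y_{1,i}\mid X_{11,i})=h_b(p)$, which holds because $X_{11,i}$ is a deterministic function of $(W_1,Z_1^{i-1})=(W_1,\underline{Y}_1^{i-1})$ and the channel is memoryless; with $H(\underline{Y}_1^n)\le n$ this gives $R_1\le 1-h_b(p)$. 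The bound $R_3\le 1-h_b(p*\epsilon)$ follows by the identical argument with $H(Y_{3,i}\mid X_{33,i})=h_b(p*\epsilon)$.

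For $R_2$ the receiver observes the pair $(\underline{Y}_2^n,\underline{Y}_2^{\prime n})$, so I start from $nR_2\le I(W_2;\underline{Y}_2^n,\underline{Y}_2^{\prime n})+n\epsilon_n$. The first key step is that encoder~$2$ has no feedback, so $\underline{X}_2^n$ is a deterministic function of $W_2$ alone; consequently $W_2\to \underline{X}_2^n\to(\underline{Y}_2^n,\underline{Y}_2^{\prime n})$ is a Markov chain and $I(W_2;\underline{Y}_2^n,\underline{Y}_2^{\prime n})\le I(\underline{X}_2^n;\underline{Y}_2^n,\underline{Y}_2^{\prime n})$. I would then split this into the direct contribution $I(\underline{X}_2^n;\underline{Y}_2^n)$, handled exactly as for $R_1$ and bounded by $n(1-h_b(p))$, and the relay contribution $I(\underline{X}_2^n;\underline{Y}_2^{\prime n}\mid \underline{Y}_2^n)$.

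The crux is the relay term, and this is where the agreement probability enters. Since $\underline{Y}_2^{\prime n}$ is produced from $(\underline{X}_{12}^n,\underline{X}_{32}^n)$ through the memoryless $Y_2'$-channel with fresh noise, the chain $(\underline{X}_2^n,\underline{Y}_2^n)\to(\underline{X}_{12}^n,\underline{X}_{32}^n)\to\underline{Y}_2^{\prime n}$ is Markov, giving $I(\underline{X}_2^n;\underline{Y}_2^{\prime n}\mid\underline{Y}_2^n)\le I(\underline{X}_{12}^n,\underline{X}_{32}^n;\underline{Y}_2^{\prime n})\le\sum_i I(X_{12,i},X_{32,i};Y'_{2,i})$. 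The decisive observation is that the fair coin $E$ erases the input whenever the two encoder outputs disagree: if $X_{12,i}=X_{32,i}$ then $Y'_{2,i}=X_{12,i}\oplus N_\delta$ is a BSC$(\delta)$ output, whereas if $X_{12,i}\neq X_{32,i}$ then $Y'_{2,i}=X_{12,i}\oplus N_\delta\oplus E$ is uniform and independent of the inputs. Hence $H(Y'_{2,i}\mid X_{12,i},X_{32,i})=P(X_{12,i}=X_{32,i})\,h_b(\delta)+\big(1-P(X_{12,i}=X_{32,i})\big)$, which with $H(Y'_{2,i})\le 1$ yields the per-letter bound $I(X_{12,i},X_{32,i};Y'_{2,i})\le P(X_{12,i}=X_{32,i})\,(1-h_b(\delta))$. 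Summing gives a relay contribution of at most $n(1-h_b(\delta))\alpha$, where $\alpha=\frac{1}{n}\sum_i P(X_{12,i}=X_{32,i})$.

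Finally I would combine the two estimates of $I(\underline{X}_2^n;\underline{Y}_2^n,\underline{Y}_2^{\prime n})$: the additive bound $n(1-h_b(p))+n(1-h_b(\delta))\alpha$ and the trivial bound $H(\underline{X}_2^n)\le n$. Taking the minimum and dividing by $n$ produces $R_2\le 1-\big|h_b(p)-(1-h_b(\delta))\alpha\big|^+$ as $\epsilon_n\to 0$, using $\min\{a,1\}=1-|1-a|^+$ with $a=1-h_b(p)+(1-h_b(\delta))\alpha$. I expect the main obstacle to be precisely the relay term: one must correctly route all of $W_2$'s influence on $\underline{Y}_2^{\prime n}$ through $(\underline{X}_{12}^n,\underline{X}_{32}^n)$ via the Markov chain above, and then show that disagreement between the two distributed encoder outputs collapses the $Y_2'$-channel to pure noise. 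This is exactly the mechanism that ties the achievable $R_2$ to the coordination between encoders~$1$ and~$3$, and hence, through Lemma~\ref{th:sec3_non_bin}, to their effective length. For a randomized coding strategy the same chain is carried out conditioned on the realized code and then averaged, with $\alpha$ read as the overall average agreement probability.
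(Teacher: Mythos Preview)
Your argument is correct and arrives at exactly the same bound as the paper, but the route is genuinely different from the one in Appendix~\ref{appx: proof of thm exp 1 converse}. The paper first uses the independence of $\underline{Y}_2^{\prime n}$ from $W_2$ to write $H(W_2)=H(W_2\mid \underline{Y}_2^{\prime n})$, then Fano gives $nR_2\le I(W_2;\underline{Y}_2^n\mid \underline{Y}_2^{\prime n})+n\zeta_n$, which is massaged into $n-H(\underline{N}_p^n\mid \underline{Y}_2^{\prime n})+n\zeta_n$; the technical work is a lower bound on $H(\underline{N}_p^n\mid \underline{Y}_2^{\prime n})$ obtained by further conditioning on the agreement indicators $Z_i=\mathbb{1}\{X_{12,i}=X_{32,i}\}$. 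You instead split $I(\underline{X}_2^n;\underline{Y}_2^n,\underline{Y}_2^{\prime n})$ the other way, into a direct BSC term and a relay term $I(\underline{X}_2^n;\underline{Y}_2^{\prime n}\mid \underline{Y}_2^n)$, push the relay term through the Markov chain $(\underline{X}_2^n,\underline{Y}_2^n)\to(\underline{X}_{12}^n,\underline{X}_{32}^n)\to\underline{Y}_2^{\prime n}$, and single-letterize by directly computing $H(Y'_{2,i}\mid X_{12,i},X_{32,i})$.

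Both decompositions are valid and yield the identical inequality $R_2\le 1-\big|h_b(p)-(1-h_b(\delta))\alpha\big|^+$ before letting $\alpha\le 1$. The paper's route has the conceptual advantage that it isolates $H(\underline{N}_p^n\mid \underline{Y}_2^{\prime n})$ as the operative quantity, which matches the achievability story (encoders~$1$ and~$3$ describe $N_p$ to receiver~$2$). Your route is more elementary in that it avoids the conditioning on the agreement pattern $\underline{z}$ and the ensuing entropy chain in~\eqref{eq:A2}; the erase-on-disagreement observation about the $Y'_2$ channel does all the work in a single per-letter line. The one place to be a little careful is your inequality $I(\underline{X}_2^n;\underline{Y}_2^{\prime n}\mid \underline{Y}_2^n)\le I(\underline{X}_{12}^n,\underline{X}_{32}^n;\underline{Y}_2^{\prime n})$: it holds, but it uses both $H(\underline{Y}_2^{\prime n}\mid \underline{Y}_2^n)\le H(\underline{Y}_2^{\prime n})$ and the Markov chain, not the Markov chain alone, so state both ingredients explicitly.
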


\begin{corollary}
Suppose $p$ and $\delta$ are such that  $h(p) \leq 1-h(\delta)$, and $\epsilon=0$. Then the feedback-capacity of Example \ref{ex: IC-FB} is characterized by the following 
\begin{align*}
R_1 \leq 1-h(p), R_2\leq 1, R_3\leq 1-h(p).
\end{align*}
\end{corollary}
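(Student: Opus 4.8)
Since the statement is an exact characterization, the plan is to establish a matching converse and achievability. The converse is immediate from Theorem~\ref{thm: Exp 1 converse} specialised to $\epsilon=0$: then $p*\epsilon=p$, so every randomized coding strategy obeys $R_1\le 1-h(p)$ and $R_3\le 1-h_b(p*\epsilon)=1-h(p)$; and because the positive-part term in the $R_2$ bound is nonnegative, $R_2\le 1$ holds unconditionally, with equality approached only as the average agreement $\frac{1}{n}\sum_{i=1}^{n} P(X_{12,i}=X_{32,i})$ tends to $1$. Hence the feedback-capacity is contained in the claimed region, and it remains to exhibit a scheme attaining the corner $(1-h(p),\,1,\,1-h(p))$; the rest of the region then follows by rate reduction.

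For achievability I would exploit the common-noise structure of the channel in Fig.~\ref{fig: IC-FB-Exp}, namely that the direct links to the three receivers carry the same Bernoulli-$p$ variable $N_p$, so that the noiseless feedbacks are $Z_1=X_{11}\oplus N_p$ and $Z_3=X_{33}\oplus N_p\oplus N_\epsilon$. Transmitters $1$ and $3$ send $W_1,W_3$ on $X_{11},X_{33}$ with capacity-achieving BSC$(p)$ codes, yielding $R_1=R_3=1-h(p)$; knowing their own inputs, they strip them off and recover from past feedback the sequences $N_p$ and $N_p\oplus N_\epsilon$ respectively. The hypothesis $\epsilon=0$ is used precisely here: the two recovered sequences then coincide, so applying one common relay map to them forces $X_{12,n}=X_{32,n}$ for every $n$, i.e. perfect agreement, and $Y'_2=X_{12}\oplus N_\delta$ collapses to a clean BSC$(\delta)$.

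With this clean auxiliary link in hand, the plan is to relay the realised noise to receiver~$2$: using the causally available feedback, transmitters $1$ and $3$ transmit a channel code for the Bernoulli-$p$ source $\{N_{p,n}\}$ over the BSC$(\delta)$ seen on $Y'_2$. By source--channel coding, receiver~$2$ reconstructs $\{N_{p,n}\}$ with vanishing error whenever the source entropy rate $h(p)$ does not exceed the channel capacity $1-h_b(\delta)$, which is exactly the standing assumption $h(p)\le 1-h(\delta)$. Transmitter~$2$ meanwhile sends $W_2$ uncoded on $X_2$, and receiver~$2$, observing $Y_2=X_2\oplus N_p$, cancels the reconstructed noise to recover $X_2$, so $R_2\to 1$. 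Since $X_{11},X_{12}$ feed disjoint sub-channels (and likewise $X_{33},X_{32}$), the three message streams do not interfere and the corner point is attained.

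I expect the main difficulty to be the achievability bookkeeping rather than any single estimate: one must (i) organise the relay so that, despite the one-symbol feedback delay, the whole noise sequence $\{N_{p,n}\}$ is delivered to receiver~$2$ within the block, at the cost of only a bounded decoding delay and hence no rate penalty; and (ii) check that the BSC$(p)$ message codes on $X_{11},X_{33}$ and the relay code on $X_{12}=X_{32}$ run concurrently without coupling, since $X_{11}$ depends only on $W_1$ while $X_{12}$ depends only on the recovered noise. It is worth flagging that $\epsilon=0$ is exactly what makes the coordination free: the recovered sequences are \emph{identical}, so perfect agreement holds for \emph{any} relay map irrespective of its effective length. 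For $\epsilon>0$ the sequences differ in a Bernoulli-$\epsilon$ fraction of positions, the agreement $\frac{1}{n}\sum_{i=1}^{n} P(X_{12,i}=X_{32,i})$ can no longer be forced to $1$ by large-effective-length maps (by the Remark following Lemma~\ref{th:sec3_non_bin}), and Theorem~\ref{thm: Exp 1 converse} then bars $R_2=1$ --- which is the phenomenon the paper is after.
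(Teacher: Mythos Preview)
Your proposal is correct and follows essentially the same route as the paper: the converse is taken directly from Theorem~\ref{thm: Exp 1 converse} with $\epsilon=0$, and for achievability both you and the paper have transmitters~1 and~3 recover the common noise $N_p$ from their noiseless feedback, apply the \emph{same} source--channel code so that $X_{12}=X_{32}$ deterministically, which collapses the $Y'_2$-link to a BSC$(\delta)$ over which $N_p$ is delivered losslessly under $h(p)\le 1-h(\delta)$, after which receiver~2 cancels $N_p$ from $Y_2$ to obtain $R_2=1$. Your additional remarks on the feedback-delay bookkeeping and on why $\epsilon=0$ is the hinge are accurate and go slightly beyond what the paper spells out, but they do not change the argument.
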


\begin{proof}
The converse follows by Theorem \ref{thm: Exp 1 converse}. For the achievability, we use standard Shannon random codes at encoder $1$ and $3$. Then, the rates $R_1\leq 1-h(p)$, and $R_3\leq 1-h(p)$ are achievable. Because of the feedbacks $Z_1, Z_3$, the noise  $N_p$ is available at encoder $1$ and $3$. Transmitters one and three send $N_p$ to receiver $2$. We require $N_p$ to be decoded at receiver $2$ losslessly. Consider a good source-channel code for transmission of $N_p$ over a Binary Symmetric Channel with noise bias $\delta$. We use this codebook both at encoder $1$ and $3$. Since the source $N_p$ and the codebook are available at encoder $1$ and $3$, then $X_{12}=X_{32}$ with probability one. As a result, the channel $P_{Y'2| X_{12}X_{32}}$ becomes a binary symmetric channel with bias $\delta$. Therefore, as $h_b(p)\leq 1-h(\delta)$ then $N_p$ is reconstructed at receiver 2 without any noise. By subtracting $N_p$ from $Y_2$ the channel from $X_2$ to $Y_2$ becomes a noisless channel. Thus, $R_2=1$ is achievable. 
\end{proof}

\begin{lem}\label{lem: IC-FB exp 1 cap continuouty}
Let $\mathcal{C}_\epsilon$ denote the feedback-capacity region of the IC-FB in Example \ref{ex: IC-FB}. For any $(R_1,R_2,R_3)\in \mathcal{C}_0$, there exists a continuous function  $\zeta(\epsilon)$ such that for sufficiently small $\epsilon>0$ the rate-triple $(R_1-\zeta(\epsilon), R_2-\zeta(\epsilon), R_3-\zeta(\epsilon))\in \mathcal{C}_\epsilon$, where $\zeta(\epsilon)\rightarrow 0$, as $\epsilon \rightarrow 0$. 
\end{lem}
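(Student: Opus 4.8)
The plan is to prove the continuity of the capacity region in $\epsilon$ constructively: for each target point of $\mathcal{C}_0$, I would exhibit an explicit coding strategy for the parameter-$\epsilon$ channel whose rate triple falls short of the target by an amount that is continuous in $\epsilon$ and vanishes as $\epsilon\to 0$. The starting point is a structural observation. The parameter $\epsilon$ enters the channel of Example \ref{ex: IC-FB} only through the Bernoulli($\epsilon$) noise $N_\epsilon$, which corrupts the third receiver's observation. Placing the two channels on a common probability space with shared realizations of $N_p,N_\delta,E$ and of the messages, the parameter-$\epsilon$ channel is exactly the $\epsilon=0$ channel with its third output replaced by $Y_3^{(\epsilon)}=Y_3^{(0)}\oplus N_\epsilon^n$, an i.i.d. Bernoulli($\epsilon$) corruption independent of everything else. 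Consequently only two ingredients of any $\epsilon=0$ strategy are affected: the direct link observed by decoder $3$, and the feedback $Z_3=Y_3$ used by transmitter $3$.

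First I would dispose of the two easy links. Decoder $1$ and the feedback $Z_1=Y_1$ are untouched by $\epsilon$, so transmitter $1$ reproduces its $\epsilon=0$ behavior verbatim and $R_1$ is achieved with no loss, giving $\zeta_1\equiv 0$. At decoder $3$ the effective channel becomes the original BSC in series with a BSC($\epsilon$), i.e. a BSC($p*\epsilon$); re-using a capacity-approaching code at a reduced rate therefore costs at most $\zeta_3(\epsilon):=h_b(p*\epsilon)-h_b(p)$, which is continuous in $\epsilon$ and tends to $0$ as $\epsilon\to 0$ by continuity of $h_b$.

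The main obstacle, and the crux of the argument, is the corruption of transmitter $3$'s feedback and its effect on the link carrying $N_p$ to decoder $2$. In the $\epsilon=0$ scheme both transmitters recover $N_p$ from their feedback and transmit identical relay codewords, forcing $X_{12}=X_{32}$ so that the term $(X_{12}\oplus X_{32})\wedge E$ in $Y_2'$ vanishes and the link to decoder $2$ is a clean BSC($\delta$). At parameter $\epsilon$, transmitter $1$ still recovers $N_p$ exactly, but transmitter $3$ recovers only $\tilde N_p=N_p\oplus N_\epsilon$, which disagrees with $N_p$ in a fraction $\approx\epsilon$ of positions. Feeding $\tilde N_p$ through the same relay code produces an $X_{32}$ that agrees with $X_{12}$ except on the blocks where $N_p$ and $\tilde N_p$ differ, so the average agreement $\frac{1}{n}\sum_i P(X_{12,i}=X_{32,i})$ drops from $1$ to a value $1-\eta(\epsilon)$ with $\eta(\epsilon)\to 0$. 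On a mismatched position the residual term $(X_{12}\oplus X_{32})\wedge E=E$ is Bernoulli$(1/2)$, so such positions behave like erasures on the BSC($\delta$); the induced rate penalty $\zeta_2(\epsilon)$ at decoder $2$ is therefore bounded by a continuous function of $\eta(\epsilon)$, the monotone, continuous dependence of the $R_2$-expression in Theorem \ref{thm: Exp 1 converse} on the agreement probability giving a convenient handle on this rate. The care required here is twofold: keep the relay blocklength finite so that the per-block disagreement probability stays proportional to $\epsilon$ rather than accumulating, and verify that for each fixed $\epsilon$ the error probability still vanishes as the overall blocklength grows, so that the constructed triple genuinely belongs to $\mathcal{C}_\epsilon$.

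Finally I would set $\zeta(\epsilon):=\max\{\zeta_1(\epsilon),\zeta_2(\epsilon),\zeta_3(\epsilon)\}$, which depends only on the fixed parameters $p,\delta$ and not on the chosen rate triple. Each of the three terms is continuous in $\epsilon$ and vanishes as $\epsilon\to 0$, hence so does their maximum; applying the back-off uniformly to an arbitrary point of $\mathcal{C}_0$ (the binding cases being the boundary points reached by the scheme above) yields $(R_1-\zeta(\epsilon),R_2-\zeta(\epsilon),R_3-\zeta(\epsilon))\in\mathcal{C}_\epsilon$ for all sufficiently small $\epsilon$, with $\zeta(\epsilon)\to 0$. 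I expect the decoder-$2$ bookkeeping, translating the $O(\epsilon)$ loss of agreement into a continuous, vanishing rate loss while preserving vanishing error at fixed $\epsilon$, to be the only delicate step; the links to decoders $1$ and $3$ are immediate.
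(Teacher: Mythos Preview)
Your approach is genuinely different from the paper's. The paper argues abstractly: it writes down a multi-letter characterization of $\mathcal{C}_\epsilon$ as the set of triples satisfying $R_1\leq \tfrac1N I(X_1^N;Y_1^N)$, $R_3\leq \tfrac1N I(X_{33}^N X_{33}^{'N};Y_{33}^N Y_{33}^{'N})$, $R_2\leq \tfrac1N I(X_2^N;Y_2^N\mid Y_2^{'N})$ for some $N$ and some admissible joint law, observes that the only $\epsilon$-dependence enters through the transition law $P(y_{3,n}\mid x_{3,n})$, and invokes continuity of mutual information in the underlying distribution at fixed $N$. No specific scheme is needed; the argument applies to every point of $\mathcal{C}_0$ directly. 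Your route is operational: you fix the explicit scheme of Corollary~2 (both transmitters reconstruct $N_p$ and send identical relay codewords), couple the two channels, and track how the $\epsilon$-perturbation of $Z_3$ propagates. This is more concrete and arguably more illuminating about \emph{why} the loss is small, and it yields explicit $\zeta_1,\zeta_3$; the price is that it presupposes you know a capacity-achieving scheme for $\mathcal{C}_0$, i.e.\ it implicitly sits inside the regime $h_b(p)\leq 1-h_b(\delta)$ of Corollary~2, whereas the paper's mutual-information argument is scheme-agnostic and covers arbitrary points of $\mathcal{C}_0$ regardless of parameter regime.

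Two smaller points to tighten. First, you cite the $R_2$-expression of Theorem~\ref{thm: Exp 1 converse} as ``a convenient handle on this rate,'' but that theorem is a converse; you must quantify $\zeta_2(\epsilon)$ via a direct achievability calculation (your erasure description of mismatched positions is the right start---just carry it through rather than pointing to the outer bound). Second, your claim that $\zeta(\epsilon)$ ``depends only on $p,\delta$ and not on the chosen rate triple'' is too strong: to hit a target $R_2$ close to $1$ you need the relay blocklength $m$ large, and your disagreement estimate scales like $m\epsilon$, so $\zeta_2$ will depend on the target through $m$. Fortunately the lemma only asks for a $\zeta$ per rate triple, so this is harmless once stated correctly.
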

\begin{proof}
The proof is given in Appendix \ref{sec: proof of lemma capacity continuouty}.
\end{proof}

\begin{theorem}\label{thm: IC-FB exp 1 single letter suboptimal}
There exist $\gamma>0$ and $\epsilon >0$,  
such that for any coding strategy achieving the rate-triple $(1-h_b(p), 1-\gamma, 1-h_b(p))$ the effective length of the encoding functions producing $X_{12}$ and $X_{32}$ are bounded from above by a constant. Furthermore, the effective length is greater than 1 (i.e. uncoded transmission is not optimal).
\end{theorem}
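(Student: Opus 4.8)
The plan is to use the noiseless feedback to recast $X_{12}$ and $X_{32}$ as Boolean functions of a correlated pair of discrete memoryless sources, and then to bound the effective length from both sides using Lemma~\ref{th:sec3_non_bin}. First I would pin down the common randomness seen by the two relaying encoders: since $Z_1=Y_1$ and $Z_3=Y_3$ are noiseless and each encoder knows its own transmitted symbols, encoder $1$ recovers $N_p^{n}$ while encoder $3$ recovers $\tilde N_p^{n}:=N_p^{n}\oplus N_\epsilon^{n}$. Hence $X_{12,i}$ is a Boolean function of $(W_1,N_p^{i-1})$ and $X_{32,i}$ a Boolean function of $(W_3,\tilde N_p^{i-1})$, and $(N_p,\tilde N_p)$ is a DMS whose single-letter maximal correlation is $\psi=1-2\epsilon<1$ for every $\epsilon>0$. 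This strict inequality $\psi<1$ is exactly what makes $C_{\mathbf i}=\psi^{w_H(\mathbf i)}$ decay and thus activates Lemma~\ref{th:sec3_non_bin}; the private messages $W_1,W_3$ are independent of the pair and, being independent side information, can only reduce correlation, so the bounds apply after conditioning on them.

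The second step converts the $R_2$ requirement into an agreement requirement. Writing $\bar P:=\frac1n\sum_{i=1}^{n}P(X_{12,i}=X_{32,i})$, Theorem~\ref{thm: Exp 1 converse} gives $R_2\le 1-|h_b(p)-(1-h_b(\delta))\bar P|^{+}$. I would fix $p,\delta$ so that $h_b(p)$ sits just below $1-h_b(\delta)$; then demanding $R_2=1-\gamma$ with $\gamma$ small forces $\bar P\ge \frac{h_b(p)-\gamma}{1-h_b(\delta)}$, i.e. the average per-symbol disagreement $\frac1n\sum_i P(X_{12,i}\ne X_{32,i})$ is at most some $\eta(\gamma)$ with $\eta(\gamma)\to 0$ as $\gamma\to 0$. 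This small-disagreement hypothesis drives the upper bound.

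For the upper bound on the effective length I would apply the lower bound of Lemma~\ref{th:sec3_non_bin} coordinatewise, treating the function producing $X_{12,i}$ as $e$ and the one producing $X_{32,i}$ as $f$, with $C_{\mathbf j}=\psi^{w_H(\mathbf j)}$. Averaging over $i$ turns $\frac1n\sum_i P(X_{12,i}\ne X_{32,i})\le\eta$ into $\frac1n\sum_i\big(2\sqrt{S^{(i)}_P S^{(i)}_Q}-2\sum_{\mathbf j}\psi^{w_H(\mathbf j)}\sqrt{\mathbf P^{(i)}_{\mathbf j}\mathbf Q^{(i)}_{\mathbf j}}\big)\le\eta$, where $S^{(i)}_P=\sum_{\mathbf j}\mathbf P^{(i)}_{\mathbf j}$ and similarly for $Q$. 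Splitting each inner sum at Hamming weight $k$ and bounding the high-weight part by $\psi^{k}\sqrt{S^{(i)}_P S^{(i)}_Q}$ via Cauchy--Schwarz, small disagreement forces almost all of the spectral mass onto components with $w_H(\mathbf j)\le k$ for a constant $k=k(\psi,\eta)$; because the leftover high-weight mass is $O(\eta)$, the effective length $\widetilde{\mathbf L}=\frac1n\sum_{\mathbf i}w_H(\mathbf i)\mathbf P_{\mathbf i}$ is bounded by a constant. The reason the argument bites is that, for the relay to carry $N_p$ to receiver $2$ at all, the total variances $S^{(i)}_P,S^{(i)}_Q$ must stay bounded away from $0$ on a non-vanishing fraction of coordinates, so the $2\sqrt{S^{(i)}_P S^{(i)}_Q}$ terms cannot simply collapse.

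Finally, for the lower bound I would rule out uncoded (single-symbol) transmission. With effective length one, each $X_{12,i}$ is a function of a single clean feedback symbol and each $X_{32,i}$ of a single noisy one, so the best informative choice is the identity relay, giving $\bar P=1-\epsilon$; Theorem~\ref{thm: Exp 1 converse} then caps $R_2$ at $1-\big(h_b(p)-(1-h_b(\delta))(1-\epsilon)\big)$, which is strictly below $1-\gamma$ as soon as $\gamma<h_b(p)-(1-h_b(\delta))(1-\epsilon)$, a strictly positive gap because $h_b(p)$ was placed just below $1-h_b(\delta)$. No uncoded strategy therefore reaches the target, whereas the corollary giving the $\epsilon=0$ corner $(1-h_b(p),1,1-h_b(p))$ together with the continuity Lemma~\ref{lem: IC-FB exp 1 cap continuouty} shows the target (or a point arbitrarily close to it) is achieved by a finite-effective-length code; hence the optimal effective length exceeds one. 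I expect the third step to be the main obstacle: aggregating the coordinatewise instances of Lemma~\ref{th:sec3_non_bin} into a single statement about the block encoder's effective length, while tracking the informativeness constraint and the independent message randomness so that the total-variance terms stay nondegenerate.
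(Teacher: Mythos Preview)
Your first three steps --- identifying $(N_p,\,N_p\oplus N_\epsilon)$ as the correlated DMS pair with maximal correlation $\psi=1-2\epsilon<1$, translating the $R_2$ target via Theorem~\ref{thm: Exp 1 converse} into an agreement requirement $\bar P\approx 1$, and then invoking Lemma~\ref{th:sec3_non_bin} to cap the effective length --- are essentially the paper's argument, carried out with more care than the paper's own outline. The Cauchy--Schwarz split at Hamming weight $k$ is a reasonable way to make the implication ``high agreement $\Rightarrow$ bounded effective length'' quantitative.

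The lower-bound step (showing the effective length must exceed~$1$) has a genuine gap. You try to rule out single-letter relay by plugging $\bar P=1-\epsilon$ into Theorem~\ref{thm: Exp 1 converse}. But that theorem is an outer bound which is \emph{increasing} in $\bar P$: a single-letter encoder that is constant (or nearly so) attains $\bar P=1$, and with your parameter choice $h_b(p)<1-h_b(\delta)$ the $|\cdot|^{+}$ term then vanishes, so the bound reads $R_2\le 1$ and excludes nothing. High agreement is necessary, not sufficient; you cannot disqualify a scheme merely because its $\bar P$ is large enough to pass the converse. Your ``informativeness'' remark is pointing in the right direction, but it is never converted into a bound on $R_2$ --- you would need to argue separately that (near-)constant single-letter encoders leave $H(N_p^n\mid Y'^n_2)$ large, and that argument does not run through Theorem~\ref{thm: Exp 1 converse} at all.

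The paper handles this half in the opposite direction: it argues that with effective length one the relayed symbol is essentially the most recent feedback sample, so $Y'_{2,n}\approx N_{p,n-1}+N_{\delta,n}$, and then \emph{directly} computes $\tfrac{1}{N}H(N_p^N\mid Y'^N_2)$, obtaining a strictly positive per-symbol residual (of order $2h_b(p)-h_b(p*p)$ in the outline) and hence $R_2$ bounded away from~$1$. That is an achievability-side calculation --- what any single-letter relay can actually deliver to decoder~$2$ --- rather than an appeal to the converse. Your step~4 should be replaced by such a direct entropy estimate; the outer bound of Theorem~\ref{thm: Exp 1 converse} is the wrong tool for this direction.
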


\begin{proof}[proof outline]
 From Theorem \ref{thm: Exp 1 converse} the following upper-bound holds for $R_2$.
\begin{align*}
R_2 \leq 1-h_b(p)(1-\frac{1}{n}\sum_{i=1}^nP(X_{12,i}=X_{32,i}))
\end{align*} 
Therefore, it is required that  
\begin{align*}
\frac{1}{N}\sum_{i=1}^N P(X_{12,i}= X_{32,i})\approx 1. 
\end{align*}
This implies that $\forall n\in\mathbb{N}$, $P(X_{12,n}= X_{32,n})\approx 1$. However, by Lemma \ref{lem: single letter codes correlation }, 
this requires that the effective length be bounded from above. If the effective length is equal to 1, then $\mathbf{P}_{\mathbf{i}_n}\approx 1$ for all $n\in \mathbb{N}$, this implies that  $P(F_{1,n}(Z_1^{n-1})= Z_{1, n-1})\approx 1$. Thus, $P(Y'_{2,n} = N_p+N_\delta)\approx 1 $.
 However,
\begin{align*}
\frac{1}{N}H(N_{p,N_0}^N| Y^{'N}_{2,N_0}) &\approx (1-\frac{N_0}{N})( 2h_b(p)-h_b(p*p))\\
& \approx 2h_b(p)-h_b(p*p) >0
\end{align*} 
As a result it is not possible to reconstruct $N_p$ at the decoder losslessly. More precisely, $R_2 \lesssim 1+h_b(p*p)-2h_b(p)$ This contradicts with $R_2\approx 1$.
%
%

\end{proof}

\subsection{The Second Example}
In this subsection, we provide another example to illustrate more the necessity of coding strategy with effective finite length for communications over IC-FB.
\begin{example}\label{ex: IC-FB ex 2}
Consider the IC shown in \ref{fig: IC-FB exp 2 diagram }. The outputs of encoder 1 are denoted by $(X_{11}, X_{12})$, the output of encoder 2 is $X_2$, and the outputs of encoder 3 are $(X_{32}, X_{33},X'_{33} )$. In this setup, $Z_1$ and $Z_2$ represent the feedback available at encoder 1 and encoder 3, respectively. All the inputs alphabets in this channel are binary. All the output alphabets are binary; except $Y_1$ which a ternary. In this setup $N_1$, $N_3, N_{\delta}, N_{\epsilon}$ and $E$ are mutually independent Bernoulli random variables with parameter $p_1, p_3, \delta, \epsilon$, and $1/2$, respectively. Finally, it is assumed that  $p_1, p_3, \delta, \epsilon <1/2$. 
\begin{figure}[hbtp]
\centering
\includegraphics[scale=0.6]{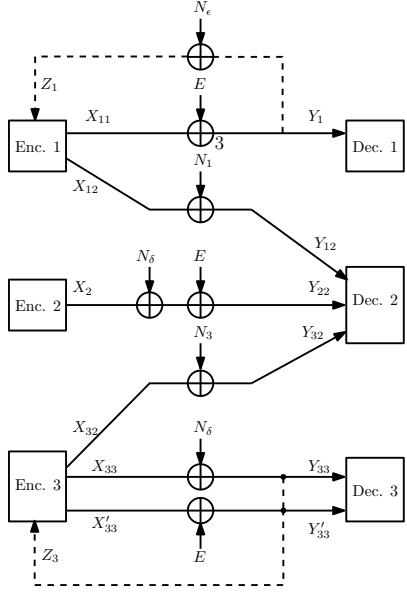}
\caption{The diagram of the IC-FB in Example 2. In this setup, $Z_1$, the feedback at Transmitter 1, is a noisy version of $Y_1$.}
\label{fig: IC-FB exp 2 diagram }
\end{figure}

\end{example}

We first study the case in which $\epsilon=0.$ The following lemma provides an achievable rates for this example.

\begin{lem}\label{lem: achievability exp 2}
For $\epsilon=0$ in the setup given in Example \ref{ex: IC-FB ex 2}, the rate-triple $(\log 3-1, 1-h_b(d), 1-h_b(\delta))$ is achievable, where $$d=h_b^{-1}(|h_b(p_1 * \delta)+h(p_3)-1|^+).$$
\end{lem}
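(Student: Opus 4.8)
The plan is to prove achievability by splitting the coding effort into a \emph{coordination layer}, whose sole purpose is to align the interference signals $X_{12}$ and $X_{32}$ seen at receiver~$2$, and a \emph{message layer} superimposed on top of it carrying $W_1,W_2,W_3$. The guiding principle is the one already used in the achievability part of Example~\ref{ex: IC-FB}: the cross-channel to $Y'_2$ is clean precisely when $X_{12}=X_{32}$, so the two cooperating transmitters should drive these inputs to agree. In Example~\ref{ex: IC-FB} the feedback was noiseless and exact agreement was free; here the feedback $Z_1$ is a noisy version of $Y_1$, so perfect agreement is impossible and the residual disagreement surfaces as an effective crossover $d$ at receiver~$2$, yielding $R_2=1-h_b(d)$.

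First I would set up the coordination layer. Through their feedback links the two transmitters acquire correlated views of a common binary sequence: transmitter~$1$ observes a version corrupted by the feedback noise $N_1$ and the downstream noise $N_\delta$, i.e.\ with effective crossover $p_1*\delta$, while transmitter~$3$ observes a version with crossover $p_3$. Mirroring the source--channel argument of the corollary to Example~\ref{ex: IC-FB}, both transmitters apply a common finite-effective-length source--channel code to their respective noisy observations and place the result on $X_{12}$ and $X_{32}$. Writing down the rate budget produces exactly the quantity $h_b(p_1*\delta)+h(p_3)$ competing against the unit capacity available to this layer, and the two regimes of the $|\cdot|^+$ in the definition of $d$ correspond to whether the budget is met: when $h_b(p_1*\delta)+h(p_3)\le 1$ agreement is asymptotically exact, $d=0$ and $R_2=1$; otherwise the overflow $h_b(p_1*\delta)+h(p_3)-1$ is the residual conditional entropy, which I convert into the residual crossover $d=h_b^{-1}(|h_b(p_1*\delta)+h(p_3)-1|^+)$ and hence into $R_2=1-h_b(d)$ after passing the mismatch through the $(X_{12}\oplus X_{32})\wedge E$ mechanism.

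Next I would superimpose the individual messages. For $R_3=1-h_b(\delta)$ transmitter~$3$ uses a standard random code on its direct BSC$(\delta)$ subchannel to receiver~$3$, decoded by joint typicality; this is routine. For $R_1=\log 3-1$ I would exploit the ternary output $Y_1$, which is the adder of $X_{11}$ and the cross-signal $X'_{33}$ of transmitter~$3$: the ternary structure supports $\log 3$ bits in total, of which roughly one bit is consumed by $X'_{33}$ (the coordination-related signal visible at receiver~$1$), leaving $\log 3-1$ for $W_1$. The care needed here is to allocate orthogonal resources (or use superposition) so that the coordination-layer and message-layer transmissions do not corrupt one another, and to verify that each decoder succeeds with vanishing error probability.

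I expect the main obstacle to be the coordination layer, specifically proving that the threshold $d=h_b^{-1}(|h_b(p_1*\delta)+h(p_3)-1|^+)$ is attainable under \emph{noisy} feedback. This is where the distributed, finite-blocklength nature of the scheme is essential --- consistent with the thesis of the paper --- and where the delicate accounting lives: one must check that the effective noise entropies $h_b(p_1*\delta)$ and $h(p_3)$ fit the unit budget, that the two reconstructions of the common sequence disagree on only a fraction of positions with the claimed crossover $d$, and that this mismatch induces an effective BSC$(d)$ at $Y'_2$. The remaining steps --- the error analysis for the two message codes and a final direct substitution (or Fourier--Motzkin reduction) to read off the rate triple --- are standard.
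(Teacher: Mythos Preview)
Your coordination-layer story has a genuine gap: you have misread where the noises $p_1,p_3,\delta$ sit in Example~\ref{ex: IC-FB ex 2}, and as a result your mechanism for producing the value of $d$ does not hold up. When $\epsilon=0$ the feedback links are clean enough that transmitter~$1$ recovers $E$ \emph{exactly} and transmitter~$3$ recovers $(E,N_\delta)$ \emph{exactly}; the parameters $p_1$ and $p_3$ are not feedback noises but the crossover probabilities on the \emph{forward} cross-links carrying $X_{12}$ and $X_{32}$ to receiver~$2$. So the picture of ``two noisy views of a common sequence'' with crossovers $p_1*\delta$ and $p_3$, and a residual disagreement between $X_{12}$ and $X_{32}$ that becomes an effective BSC$(d)$, is not what is happening here, and your ``rate-budget $h_b(p_1*\delta)+h(p_3)$ versus unit capacity'' accounting is asserted rather than derived.

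The paper's route is different in kind: it does \emph{not} try to force $X_{12}=X_{32}$. Instead, the two encoders, knowing $E$ and $(E,N_\delta)$ respectively, help receiver~$2$ reconstruct $E\oplus N_\delta$ to within Hamming distortion $d$; once receiver~$2$ has such an estimate it can clean its observation and support $R_2=1-h_b(d)$. This is exactly the \emph{Common-Bit One-Help-One} problem of Wagner--Kelly--Altu\u{g}, and the key external lemma is their Theorem~3, which gives the source-coding constraints $R_{12}\ge 1-h_b(\Delta)$ and $R_{32}\ge h_b(\Delta*\delta)-h_b(d)$ for some $\Delta\in[0,1/2]$. Matching these against the cross-link capacities $R_{12}\le 1-h_b(p_1)$ and $R_{32}\le 1-h_b(p_3)$ and choosing $\Delta\approx p_1$ yields $h_b(d)\ge h_b(p_1*\delta)+h_b(p_3)-1$, i.e.\ exactly $d=h_b^{-1}\bigl(|h_b(p_1*\delta)+h(p_3)-1|^+\bigr)$. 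The bounds on $R_1$ and $R_3$ are indeed routine point-to-point arguments, but to make the proof go through you need to replace the alignment mechanism by this distributed rate--distortion step and cite the Wagner et al.\ result.
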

\begin{proof}[proof outline]
The bounds on $R_1, R_3$ follow from the standard arguments as in point-to-point channel coding (Fano's inequality). Next, we show the bound on $R_2$. Upon receiving $Z_1$, the first encoder recovers $E$. The third  encoder receives $Z_3$ and recovers $(E, N_\delta)$. Encoder 1 and $3$ employ a source-channel coding scheme to encode the sources $E, N_\delta$ such that decoder 2 be able to reconstruct $E+N_\delta$ within a Hamming distortion $d$. This is a similar problem to the Common-Bit One-Help-One Problem introduced in \cite{wagner_dist}. Using the results from \cite{wagner_dist} (Theorem 3), we can show that decoder 2 is able to reconstruct $E+N_\delta$ within a Hamming distortion $d$, if the bounds 
\begin{align*}
R_{32} &\geq h_b(\Delta*\delta)-h_b(d), \text{and} \quad R_{12} \geq 1-h_b(\Delta),
\end{align*} 
hold for some $0 \leq \Delta \leq 1/2$. From standard channel coding arguments the transmitted codewords from encoder 1 and 3 are decoded at receiver 2 with small probability of error, if $$R_{12}\leq 1-h_b(p_1)-\zeta, \quad \quad  R_{32}\leq 1-h_b(p_3)-\zeta,$$ where $\zeta>0$ is a sufficiently small number.  Finally the proof follows by setting $\Delta\approx p_1$, and $d$ as in the statement of the lemma. 

\end{proof}

For the case when $\epsilon>0$, there is no common information between encoder $1$ and 3. From the discontinuity argument as in \cite{wagner_dist}, we can show that the minimum distortion level $d$ is discontinuous as a function of $\epsilon$. This implies that the achievable rates using single letter coding scheme strictly decreases comparing to the case when $\epsilon=0$. Hence, there exists a $\gamma>0$ such that any rate-triple $(R_1,R_2,R_3)$ with  $R_2 >  1-h_b(\delta)-\gamma$ is not achievable using single-letter coding strategies, where $d$ is as in Lemma \ref{lem: achievability exp 2}. More precisely, the following Lemma holds.

\begin{lem}
There exist $\gamma>0$ and $\epsilon >0$, such that for any coding strategy achieving the rate-triple $(\log_3 -1, 1-h_b(d)-\gamma, 1-h_b(\delta))$ the effective length of the encoding functions producing $X_{12}$ and $X_{32}$ are bounded from above by a constant. Furthermore, the effective length is greater than 1 (i.e. uncoded transmission is not optimal).
\end{lem}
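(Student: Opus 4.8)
The plan is to follow the template of the proof of Theorem~\ref{thm: IC-FB exp 1 single letter suboptimal}, adapting it to the two-helper coordination of Example~\ref{ex: IC-FB ex 2} through the Common-Bit One-Help-One framework of \cite{wagner_dist}. As there, the argument has two halves: an upper bound on the effective length (asymptotically large effective length is sub-optimal) and a lower bound (uncoded transmission is sub-optimal). First I would establish an outer bound, in the spirit of the $R_2$-inequality of Theorem~\ref{thm: Exp 1 converse}, showing that the attainable $R_2$ is controlled by the average agreement probability $\frac{1}{n}\sum_{i=1}^n P(X_{12,i}=X_{32,i})$ between the two helper outputs: since decoder~$2$ reconstructs $E+N_\delta$ only through the coordinated transmissions of encoders~$1$ and~$3$, its fidelity, and hence $R_2$, degrades as $X_{12}$ and $X_{32}$ fail to agree. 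Demanding $R_2 = 1-h_b(d)-\gamma$ with $\gamma$ small then forces $\frac{1}{n}\sum_{i=1}^n P(X_{12,i}=X_{32,i})\approx 1$, hence $P(X_{12,n}=X_{32,n})\approx 1$ for almost every $n$.

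For the upper bound on the effective length I would invoke Lemma~\ref{lem: single letter codes correlation }. Because $\epsilon>0$ corrupts the common observation $E$ available to the two encoders, the feedback sequences driving $X_{12}$ and $X_{32}$ are correlated but not identical, so the maximal single-letter correlation obeys $\psi<1$. The bound of Lemma~\ref{lem: single letter codes correlation } then shows that an agreement probability approaching $1$ is possible only when the dependency spectra $\mathbf{P}_{\mathbf{i}},\mathbf{Q}_{\mathbf{i}}$ are concentrated on indices $\mathbf{i}$ of small Hamming weight; by the definition $\widetilde{\mathbf{L}}=\frac{1}{n}\sum_{\mathbf{i}}w_H(\mathbf{i})\mathbf{P}_{\mathbf{i}}$ this bounds the effective length of the encoders producing $X_{12}$ and $X_{32}$ from above by a constant depending only on $\gamma$ and $\psi$, exactly as in Theorem~\ref{thm: IC-FB exp 1 single letter suboptimal}.

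For the lower bound I would argue by contradiction. If the effective length equals $1$, each helper output is an instantaneous function of the corresponding feedback, so encoders~$1$ and~$3$ can forward only single-letter functions of their corrupted observations of $E$ and $N_\delta$. For $\epsilon>0$ this is exactly the regime where the discontinuity argument of \cite{wagner_dist} applies: the minimum Hamming distortion attainable for $E+N_\delta$ jumps discontinuously above its $\epsilon=0$ value $d$, since single-letter schemes cannot regenerate the common randomness destroyed by $N_\epsilon$. This yields a strictly positive gap $\Delta(\epsilon)$ between $1-h_b(d)$ and the best $R_2$ achievable at effective length~$1$; choosing $\gamma<\Delta(\epsilon)$ makes the target rate $1-h_b(d)-\gamma$ unattainable by uncoded transmission, the desired contradiction. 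Non-vacuousness of the statement then follows from a continuity argument paralleling Lemma~\ref{lem: IC-FB exp 1 cap continuouty}, which guarantees that some code of finite effective length strictly exceeding~$1$ does achieve the target for small $\epsilon$.

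I expect the main obstacle to be making the discontinuity of \cite{wagner_dist} quantitative in this feedback setting: one must show that the distortion penalty suffered at effective length~$1$ is bounded away from zero by a gap $\Delta(\epsilon)$ that can be cleanly separated from $\gamma$, while simultaneously verifying that a slightly larger but still bounded effective length restores the continuity. This calls for a single-letterization of the helper problem that explicitly tracks the effective length, which is the technically delicate step; the remaining estimates are direct transcriptions of the corresponding steps in the proof of Theorem~\ref{thm: IC-FB exp 1 single letter suboptimal}.
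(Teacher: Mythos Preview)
Your proposal is correct and mirrors the paper's own approach: the paper's proof consists of the single sentence that the argument follows exactly as in Theorem~\ref{thm: IC-FB exp 1 single letter suboptimal}, with the preceding paragraph already invoking the discontinuity result of \cite{wagner_dist} to rule out effective length~$1$. Your elaboration of the two halves (agreement-probability outer bound plus Lemma~\ref{lem: single letter codes correlation } for the upper bound, and the Wagner--Kelly--Altug discontinuity for the lower bound) is precisely the intended adaptation.
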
   

\begin{proof}
The proof for this lemma follows from a similar argument as in Theorem \ref{thm: IC-FB exp 1 single letter suboptimal}.
\end{proof}

\section{Conclusion}
We provided two examples of channel coding with feedback over interference networks where finite effective length coding is necessary to achieve optimal performance. We showed that in these examples, optimality achieving coding strategies utilize the feedback available in different terminals to coordinate their outputs. We showed that coding strategies with asymptotically large effective lengths are inefficient in preserving the correlation among their outputs and are hence unable to coordinate their inputs to the channel effectively.

\appendices
\section{Proof of Theorem \ref{thm: Exp 1 converse}}\label{appx: proof of thm exp 1 converse}
\begin{proof}
The bounds $R_1 \leq 1-h_b(p)$ and $R_3 \leq 1-h_b(p)$ follows from standard arguments as in point-to-point channel coding problem. Note that the feedback does not increase the rate of $R_1$ and $R_3$ since these upper-bounds correspond to the point-to-point capacity and feedback does not increase point-to-point capacity. To bound $R_2$ we use Fano's inequality.  Therefore, ;  
\begin{align}
\nonumber nR_2 \leq H(W_2)&\stackrel{(a)}=H(W_2|Y^{'n}_{2})\\
\nonumber &\stackrel{(b)}{\leq} I(W_2; Y^n_{2}|Y^{'n}_{2})+n\zeta_n \\
\nonumber &= H(Y^n_{2}|Y^{'n}_{2})-H(Y^n_{2}|W_2, Y^{'n}_{2})+n\zeta_n \\
\nonumber &\stackrel{(c)}{=} H(Y^n_{2}|Y^{'n}_{2})-H(Y^n_{2}|W_2,X_2^n, Y^{'n}_{2})+n\zeta_n \\
\nonumber &\stackrel{(d)}{\leq} n-H(Y^n_{2}|W_2,X_2^n, Y^{'n}_{2})+n\zeta_n \\
\label{eq:A1}&\stackrel{(e)}{=}n-H(N_p^n|Y^{'n}_{2})+n\zeta_n,
\end{align}
where (a), (c) and (e) follow from the fact that $X^n_{12}, X^n_{32}, Y^{'n}_{2}$ are independent of $W_2$ and that $X_2^n$ is a function of $W_2$ since the second transmitter does not receive feedback. (b) follows from Fano's inequality and (d) follows from the fact that $Y_2$ is binary.

Define the random vector $Z_i, i\in [1,n]$ as the indicator function of the event that $X_{12, i}=X_{32, i}$. Then,
\begin{align*}
H(N_p^n|Y^{'n}_{2}) &\geq H(N_p^n|Y^{'n}_{2}, Z^n)\\
&=\sum_{\underline{z} \in \{0,1\}^n} p(Z^n=\underline{z}) H(N_p^n|Y^{'n}_{2}, \underline{z})
\end{align*}
For the innermost term in the above inequality we have:
\begin{align}
\nonumber H(N_p^n|Y^{'n}_{2}, \underline{z})
&\stackrel{(a)}{=}H(N_p^n|X_{12}^{\underline{z}}\oplus N_\delta^{\underline{z}})\\
\nonumber &=H(N_p^n, X_{12}^{\underline{z}}\oplus N_\delta^{\underline{z}})-H(X_{12}^{\underline{z}}\oplus N_\delta^{\underline{z}})\\
\nonumber &\stackrel{(b)}{\geq} H(N_p^n, X_{12}^{\underline{z}}\oplus N_\delta^{\underline{z}})-w_H(\underline{z})\\
\nonumber &\stackrel{(c)}{\geq} H(N_p^n, N_\delta^{\underline{z}})-w_H(\underline{z})\\
\nonumber &= H(N_p^n)+H(N_\delta^{\underline{z}})-w_H(\underline{z})\\
&= nh_b(p)-w_H(\underline{z})(1-h_b(\delta)),
\label{eq:A2}
\end{align}
where (a) follows from the definition of $Y'_2$, (b) follows from the fact that the binary entropy is upper bounded by one and finally, (c) follows form the fact that $X_{12}$ is independent of $N_{\delta}$. 
Combining equations \eqref{eq:A1} and \eqref{eq:A2}, we get: 
\begin{align*}
H(N_p^n|Y^{'n}_{2}) &\geq n|h_b(p)- (1-h_b(\delta)) \frac{1}{n}\EE[w_H(\underline{z})]|^+
\end{align*}
\end{proof}

\section{Proof of Lemma \ref{lem: IC-FB exp 1 cap continuouty}}\label{sec: proof of lemma capacity continuouty}
\begin{proof}
 We can show that $\mathcal{C}_\epsilon$ is the set of all rate-triples $(R_1,R_2,R_3)$ for which $\exists N \in \NN$ such that
\begin{align*}
R_1& \leq \frac{1}{N} I(X_1^N; Y_1^N)\\
R_3 & \leq \frac{1}{N}I(X_{33}^NX^{'N}_{33}; Y_{33}^NY_{33}^{'N}) \\
R_2 &\leq \frac{1}{N}I(X_2^N; Y_2^N|Y^{'N}_2).
\end{align*}
where the joint distribution of the variables is in some set $\mathcal{P}_\epsilon$. The proof for this statement follows by a converse and an achievability argument. For the converse, we use Fanoe's inequality as in Theorem \ref{thm: Exp 1 converse}. The achievability is straight-forward and follows by employing a multi-letter random coding scheme.  For the case when $\epsilon=0$, and any achievable rate-triple $(R_1,R_2,R_3)$ there exist $N$ and $\gamma>0$ such that 
\begin{align}
R_1& \leq 1-h_b(p)-\gamma\\
R_3 & \leq 1-h_b(p)-\gamma\\\label{eq: bound on R_2 for ep =0}
R_2 &\leq \frac{1}{N}I(X_2^N; Y_2^N|Y^{'N}_2)-\gamma,
\end{align}
where the joint distribution of the random variables involved is denoted by $\PP_0\in \mathcal{P}_0$.
Since the binary entropy function is continuous, then there exists $\zeta(\epsilon)$ such that $1-h_b(p) \leq 1-h_b(p*\epsilon)+\zeta(\epsilon)$.  Next, we show continuity for the right-hand side of \eqref{eq: bound on R_2 for ep =0}. Fix $N, \PP_\epsilon \in \mathcal{P}_\epsilon$, and consider the third inequality in the characterization of $\mathcal{C}_\epsilon$. Note that the only probability distribution depending on $\epsilon$ is $P(y_{3,n}|x_{3,n})$. Since this conditional probability is continuous with $\epsilon$ then so is $\PP_\epsilon$. 
Thus, for any fixed $N$, the bound on $R_2$ in $\mathcal{C}_{\epsilon}$ is continuous as a function function of $\epsilon$. As a result, there exists a function $\zeta'(\epsilon)$ such that the right-hand side of the inequality \eqref{eq: bound on R_2 for ep =0} is upper bounded by $\frac{1}{N}I(X_2^N; Y_2^N|Y^{'N}_2)+\zeta'(\epsilon)-\gamma$ for some joint distribution $\PP_\epsilon \in \mathcal{P}_\epsilon$. As a result, the following bounds hold for $(R_1,R_2,R_3)$.
\begin{align*}
R_1& \leq 1-h_b(p)+\zeta(\epsilon)-\gamma\\
R_3 & \leq 1-h_b(p*\epsilon)+\zeta(\epsilon)-\gamma\\
R_2 &\leq \frac{1}{N}I(X_2^N; Y_2^N|Y^{'N}_2)+\zeta(\epsilon)-\gamma.
\end{align*}
 This implies that there exists $\epsilon>0$ sufficiently small such that $(R_1-\zeta(\epsilon),R_2-\zeta(\epsilon),R_3-\zeta(\epsilon))\in \mathcal{C}_\epsilon$. Thus, we establish the continuity of $\mathcal{C}_\epsilon$ at $\epsilon=0$.  
\end{proof}
%

\end{document}